\newcommand{\R}{\mathbb{R}}
\newcommand{\N}{\mathbb{N}}
\newcommand{\ud}{\mathrm{d}}
\newcommand{\ui}{\mathrm{i}}
\newcommand{\ue}{\mathrm{e}}
\renewcommand{\phi}{\varphi}
\newcommand{\eps}{\varepsilon}
\theoremstyle{theorem}
\newtheorem{thm}{Theorem}
\newtheorem{lemma}[thm]{Lemma}
\newtheorem{prop}[thm]{Proposition}
\newtheorem{corol}[thm]{Corollary}
\newtheorem*{thm*}{Theorem}
\newtheorem*{rem*}{Remark}
\theoremstyle{definition}
\newtheorem{definition}{Definition}
\title{A remark on the attainable set of the Schrödinger equation}
\author{Jonas Lampart 
%\thanks{Ludwig Maximilians Universität München, Mathematisches Institut, Theresienstr. 39, D-80333 München}
\thanks{CNRS \& Laboratoire interdisciplinaire Carnot de Bourgogne (UMR 6303)
, Université de Bourgogne 
Franche-Comté, 9 Av. A. Savary, 21078 Dijon Cedex, France.
\texttt{jonas.lampart@u-bourgogne.fr}}}
\begin{document}
\maketitle

\begin{abstract}
 We discuss the set of wavefunctions $\psi_V(t)$ that can be obtained from a given initial condition $\psi_0$ by applying the flow of the Schrödinger operator $-\Delta + V(t,x)$ and varying the potential $V(t,x)$. We show that this set has empty interior, both as a subset of the sphere in $L^2(\R^d)$ and as a set of trajectories.
\end{abstract}

\section{Introduction}

In this letter we study the total set of wavefunctions $\psi_V(t)$ that can be obtained from a \emph{fixed} initial condition $\psi_0$ by applying the flow of the Schrödinger operator $-\Delta + V(t,x)$, where $t>0$ is \emph{any} time and $V$ is \emph{any} potential in $L^p_{\mathrm{loc}}(\R, L^\infty(\R^d))$, with $p>1$.

A famous result in this direction, due to Ball, Marsden and Slemrod~\cite{ball1982}, concerns the general evolution equation
\begin{equation}
 \left\{
 \begin{aligned}
 \partial_t \psi_u(t) &= A \psi_u(t) + u(t)B\psi_u(t)        \\
 \psi_u(0)&=\psi_0,
 \end{aligned}
 \right.\label{eq:evolution}
\end{equation}
where $(A,D(A))$ is a generator of a strongly continuous semi-group on some Banach space $X$, $B$ is a \emph{fixed} bounded operator, and $u(t)\in L^p_\mathrm{loc}([0,\infty))$, $p> 1$.
They proved that the set 
\begin{equation}
\mathcal{A}_B = \bigcup_{p>1} \left\{ \psi_u(t) \Big| t \geq 0,\, \psi_u \text{ solves~\eqref{eq:evolution} with } u\in L^p_\mathrm{loc}([0,\infty)) \right\}\label{eq:Att gen}
\end{equation}
is a countable union of compact subsets of $X$. If $X$ has infinite dimension, this means that it has empty interior, by Baire's theorem. This has important consequences for the problem of controlling such a system, since it implies that for a dense set of elements $\psi_1\in X$ there exist no $t\geq 0$ and $u \in L^p_\mathrm{loc}$ such that $\psi_u(t)=\psi_1$, i.e., given $B$ the system cannot be driven from $\psi_0$ to $\psi_1$ by choosing  $u$. 
%However, it does not answer the more general question of whether there exists a bounded operator-valued function $t\mapsto B(t)$ such that $\psi_B (t)=\psi_1$.

Taking $X=L^2(\R^d)$, $A=\ui \Delta$, $D(A)=H^2(\R^d)$ and $(B\psi)(x)= -\ui V(x)\psi(x)$ with $V\in L^\infty(\R^d,\R)$ gives a result for the Schrödinger equation.
Of course, in this case, the attainable set $\mathcal{A}_V$ is always contained in the sphere of radius $\|\psi_0 \|$, which has empty interior in $L^2(\R^d)$. However, $\mathcal{A}_V$ also has empty interior in the relative topology of the sphere, see Turinici~\cite{turinici2000}. More recently, these results were generalised to include the case $p=1$ and to Radon measures by Boussa\"{\i}d, Caponigro and Chambrion~\cite{boussaid2014,boussaid2017}, and to non-linear equations by Chambrion and Thomann~\cite{chambrion2019, chambrion2020}.

All of these results are based on the observation that the solution operator to~\eqref{eq:evolution} can be expressed as a series of integral operators. From this, one deduces that the map $u\mapsto \psi_u(t)$ is compact, on appropriate spaces. 
%When the input $u$ is taken from a space which is locally weakly (or weak-$*$) compact, one can then express $\mathcal{A}_V$ as a countable union of compact sets.

The limitation of these statements is that
the spacial form of the external field  $V(x)$ is fixed, and only its magnitude is variable.  
In this note we show that this restriction can be removed for the Schrödinger equation.

This generalisation is based on the observation that the map assigning to a potential $V(x)$ the solution $\psi_V(t)$, where now $u$ is fixed, is also compact. 
%This will allow us to replace in the statements above the potential $u(t)V(x)$ by an essentially arbitrary time-dependent potential $V(t,x)$. 
While earlier results do not use any specific properties of the generator $A$, this extension relies on the local smoothing effect in the Schrödinger equation on $L^2(\R^d)$.
For the problem of controlling the equation, this means that a dense subset of the sphere in $L^2(\R^d)$ cannot be reached, even if one may not only choose $t$ and the function $u(t)$, but also $V(x)$, or more generally $V(t,x)$. 
More precisely, our main result (Theorem~\ref{thm1} below) implies that the Schrödinger equation on $L^2(\R^d)$ is not exactly controllable by potentials $V\in L^p_\mathrm{loc}(\R, L^\infty(\R^d))$ with $p>1$.

By contrast, there are some positive controllability results for Schrödinger equations on bounded domains in one and two dimensions, due to Beauchard, Coron~\cite{beauchard2006} and Beauchard, Laurent~\cite{beauchard2010, beauchard2016}. 
Under less restrictive assumptions only approximate controllability (i.e. density of the attainable set) is known, see e.g.~\cite{chambrion2009, mason2010, nersesyan2010}.
All of these positive results have in common that the generator $A$ of the uncontrolled dynamics has purely discrete spectrum. This precludes the dispersive local smoothing effects that our proof relies on (see the end the next section for a more technical remark on this point).
Whether approximate controllability still holds in our setting is a very interesting open question.

Our results also have implications for problems related to time-dependent density functional theory. There, one is interested in the map that assigns to a time-dependent potential $V(t,x)$ the density $\rho_V(t,x)=|\psi_V(t,x)|^2$ (or, in many particle systems, the one-particle density, a marginal of $|\psi_V|^2$). The Runge-Gross theorem~\cite{runge1984} states that this map is (essentially) one-to-one (though the proof requires very strong hypothesis on both $V$ and the initial conditon, see~\cite{tddft}). Our results will show that the range of this map, whose elements are called $V$-representable densities, has empty interior. This is an obstruction to constructing a local inverse, e.g.~by the inverse function theorem.

\section{Results}\label{sect:res}

As stated in the introduction, we would like to consider potentials in $L^p_\mathrm{loc}(\R, L^\infty(\R^d))$, $p>1$. However, to obtain local compactness, it will be convenient to work in spaces that are duals of Banach spaces, and $L^p(I,L^\infty(\R^d))$, $I\subset \R$ a compact interval, is \emph{not} the dual of $L^{p'}(I,L^1(\R^d))$, but strictly contained therein (except for $p=\infty$; see Diestel and Uhl~\cite{DieUhl}).
We will thus need to work in a somewhat larger space of \enquote{potentials}.
The elements of $\left(L^{p'}(I,L^1(\R^d))\right)'$ can naturally be identified with $L^\infty$-valued measures that are absolutely continuous with respect to the Lebesgue measure (c.f.~\cite[Thm.IV.1.1]{DieUhl}). Since $L^\infty$ does not have the Radon-Nikodym property (this can be seen by combining~\cite[Thm.III.3.2, Ex.III.1.1]{DieUhl}), there exist such absolutely continuous measures that do not have a density, and thus cannot be identified with a function.

\begin{definition}
 For a compact interval $I\subset \R$, $p\geq 1$ and $p'=(1-1/p)^{-1}$ we define
 \begin{equation}
  M^p(I):=\left(L^{p'}\left(I,L^1(\R^d,\R)\right)\right)'.
 \end{equation}
We define $M^p_\mathrm{loc}(\R)$ as those $L^\infty$-valued Borel measures on $\R$ whose restriction to every compact $I$ is an element of $M^p(I)$.
\end{definition}

For any $V\in M^p(I)$ the integral with respect to $V$ of a function $f\in L^\infty(I,L^2(\R^d))$ defines an element of $L^2(\R^d)$ by duality, i.e.,
\begin{equation}\label{eq:int def}
 \left\langle \int_I f(t) V(\ud t), g \right \rangle 
 := \left\langle V, f(t)g\right\rangle_{M^p \times L^{p'}(I,L^1(\R^d))} \qquad \forall g\in L^2(\R^d). 
\end{equation}
Since $f(t) V(\ud t)$ is an absolutely continuous $L^2$-valued measure and $L^2$, being reflexive, has the Radon-Nikodym property~\cite[Cor.III.13]{DieUhl}, there exists a function $\phi\in L^p(I,L^2(\R^d))$ such that $f(t) V(\ud t)=\phi(t) \ud t$.
Using this, 
\begin{equation}\label{eq:density}
 \int_I \ue^{-\ui\Delta t} f(t) V(\ud t)  = \int_I \ue^{-\ui\Delta t} \phi(t) \ud t
\end{equation}
is clearly well defined.
The integral form of the Schrödinger equation
\begin{equation}\label{eq:SE mild}
 \psi_V(t)=\ue^{\ui\Delta t} \psi_0 -\ui \int_0^t \ue^{\ui \Delta(t-s)} \psi_V(s)V(\ud s)
\end{equation}
thus makes sense for any $V\in M^p_\mathrm{loc}(\R)$.

\begin{lemma}
 Let $V\in M^p_\mathrm{loc}(\R)$ with $p\geq 1$, and $\psi_0\in L^2(\R^d)$. The equation~\eqref{eq:SE mild} has a unique solution $\psi_V\in C^0(\R,L^2(\R^d))$.
\end{lemma}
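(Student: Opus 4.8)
The plan is to prove existence and uniqueness by a standard contraction-mapping argument, with the only nonstandard ingredient being the bookkeeping needed to control the integral against the measure-valued potential $V\in M^p_\mathrm{loc}(\R)$ rather than a genuine function. Fix a compact interval $I=[0,T]$ containing the times of interest and define, on the Banach space $X_T:=C^0([0,T],L^2(\R^d))$, the map
\begin{equation*}
 (\Phi\psi)(t):=\ue^{\ui\Delta t}\psi_0+\int_0^t\ue^{\ui\Delta(t-s)}\psi(s)V(\ud s).
\end{equation*}
First I would check that $\Phi$ maps $X_T$ into itself: for $\psi\in X_T$ the function $s\mapsto \ue^{\ui\Delta(t-s)}\psi(s)$ lies in $L^\infty([0,t],L^2(\R^d))$ (using that $\ue^{\ui\Delta\cdot}$ is a strongly continuous unitary group and $\psi$ is continuous hence bounded), so by the discussion around~\eqref{eq:int def}--\eqref{eq:density} there is a density $\phi\in L^p([0,t],L^2(\R^d))$ with $\psi(s)V(\ud s)=\phi(s)\,\ud s$ and the integral is a well-defined element of $L^2(\R^d)$; continuity in $t$ then follows from dominated convergence together with strong continuity of the group. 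This is essentially the content already spelled out before the lemma, so it is quick.

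Next I would estimate the Lipschitz constant of $\Phi$ on a short interval. Since $\ue^{\ui\Delta\tau}$ is unitary,
\begin{equation*}
 \|(\Phi\psi_1)(t)-(\Phi\psi_2)(t)\|_{L^2}\le \int_0^t\bigl\|(\psi_1(s)-\psi_2(s))V(\ud s)\bigr\|_{L^2},
\end{equation*}
and the key point is a bound of the form
\begin{equation*}
 \int_0^\tau \bigl\|(\psi_1(s)-\psi_2(s))V(\ud s)\bigr\|_{L^2}\le \|V\|_{M^p([0,\tau])}\,\tau^{1/p'}\,\sup_{s\in[0,\tau]}\|\psi_1(s)-\psi_2(s)\|_{L^2}.
\end{equation*}
To get this I would test against $g\in L^2(\R^d)$ with $\|g\|=1$, writing the left side via the dual pairing $\langle V, (\psi_1(s)-\psi_2(s))g\rangle$ and bounding $\|(\psi_1(s)-\psi_2(s))g\|_{L^1(\R^d)}\le \|\psi_1(s)-\psi_2(s)\|_{L^2}\|g\|_{L^2}$ by Cauchy--Schwarz, so that $s\mapsto (\psi_1(s)-\psi_2(s))g$ has $L^{p'}([0,\tau],L^1)$-norm at most $\tau^{1/p'}\sup_s\|\psi_1(s)-\psi_2(s)\|_{L^2}$; one has to be a little careful with measurability and the sign of the integrand, which is why I would first reduce to the $L^2$-density $\phi$ and only then pass to the measure norm. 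Choosing $\tau$ small enough that $\|V\|_{M^p([0,\tau])}\tau^{1/p'}<1$ makes $\Phi$ a contraction on $X_\tau$, giving a unique local solution; since the smallness condition depends only on $\|V\|_{M^p}$ on the relevant subinterval, and $t\mapsto\|V\|_{M^p([0,t])}$ is finite and nondecreasing, I can iterate over finitely many steps to cover all of $I$, and then over $\R$, yielding the global unique solution $\psi_V\in C^0(I,L^2(\R^d))$.

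The main obstacle I anticipate is not the contraction itself but making the estimate on $\int_0^\tau\|(\psi_1(s)-\psi_2(s))V(\ud s)\|_{L^2}$ rigorous: one must justify that $s\mapsto \|\phi_1(s)-\phi_2(s)\|_{L^2}$ is measurable and that the $M^p$-norm really dominates the integral of the pointwise $L^2$-norms of the density, rather than just the norm of the vector-valued integral. This is handled by a density argument — approximating the integrand by simple functions and using the definition of the $M^p$-norm as the operator norm on $L^{p'}(I,L^1)$ — but it is the step that needs the most care. Everything else (the subadditivity of the norm in $t$, strong continuity of the free evolution, Banach fixed point) is routine.
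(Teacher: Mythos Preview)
Your proposal is correct and follows the same strategy as the paper: a contraction-mapping argument on $C^0([0,\tau],L^2)$ for short times, followed by globalisation. The paper's two-line proof phrases the extension as a consequence of $L^2$-norm conservation rather than your direct subinterval iteration, but the two are equivalent here since the contraction constant is independent of the initial datum.
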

\begin{proof}
For small $t$ this follows from Banach's fixed point theorem.
For a solution $\psi_V$ to~\eqref{eq:SE mild}, $\ue^{-\ui \Delta t}\psi_V(t)$ is absolutely continuous with derivative in $L^p(I,L^2(\R^d))$.
This implies that $t\mapsto\|\psi_V(t)\|^2$ is absolutely continuous, from which one easily deduces that  $\|\psi_V(t)\|_2^2$ is constant by calculating its derivative.
This gives global existence and uniqueness by standard arguments.
\end{proof}

Now that we have solutions to~\eqref{eq:SE mild}, we define the attainable set by
\begin{equation}\label{eq:Ap}
 \mathcal{A}:=\bigcup_{p>1}
 \left\{ \psi_V(t) \Big|t \in \R,  \psi_V \text{ solves }\eqref{eq:SE mild} \text{ with } V\in M^p_\mathrm{loc}(\R) \right\}.
\end{equation}
This is naturally thought of as a subset of the sphere 
\begin{equation}
 \mathcal{S}_{\psi_0}:=\left\{ \psi \in L^2(\R^d) \Big\vert \|\psi\|_2=\|\psi_0\|_2 \right\}. 
\end{equation}
We also consider the set of $V$-representable densities on $[0,T]$, $T > 0$,
\begin{equation}
 \mathcal{R}_{T} :=\left\{ t \mapsto |\psi_V(t)|^2 \Big|  \psi_V \text{ solves }\eqref{eq:SE mild} \text{ on } [0,T] \text{ with } V\in M^p([0,T]) \right\},
\end{equation}
as a subset of 
\begin{equation}
 \mathcal{D}_{\psi_0}=\left\{ \rho\in C^0\left([0,T],L^1(\R^d)\right) \Big| \rho(0)=|\psi_0|^2, \, \int \rho(t,x) \ud x =\|\psi_0\|_2^2 \right\}.
\end{equation}

\begin{thm}\label{thm1}
For every $\psi_0\in L^2(\R^d)$ and $T> 0$ the sets $\mathcal{A}$ 
and $\mathcal{R}_{T}$ are countable unions of compact subsets of $L^2(\R^d)$, respectively $C^0\left([0,T],L^1(\R^d)\right)$.
\end{thm}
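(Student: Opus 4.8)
The plan is to exhibit $\mathcal{A}$ and $\mathcal{R}_T$ as countable unions of compact sets by discretising over the free parameters ($p$, the time $t$, and the size of $V$), and then to show that each piece is compact by proving that the relevant solution map is continuous from a bounded, weak-$*$ compact set of potentials into the strong topology of $L^2$ (resp.\ $C^0([0,T],L^1)$). Concretely, for $n\in\N$ set $p_n = 1+1/n$, fix the interval $I=[0,T]$ (for $\mathcal{A}$, take $T=n$ as well), and let
\begin{equation*}
 K_{n,R} := \left\{ V\in M^{p_n}(I) \ \big|\ \|V\|_{M^{p_n}(I)} \leq R \right\}.
\end{equation*}
Since $M^{p_n}(I)$ is by definition a dual space (the dual of $L^{p_n'}(I,L^1(\R^d))$, which is separable), $K_{n,R}$ is weak-$*$ compact and metrisable by the Banach--Alaoglu theorem. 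I would then write
\begin{equation*}
 \mathcal{A} = \bigcup_{n,R\in\N} \left\{ \psi_V(t) \ \big|\ t\in[0,n],\ V\in K_{n,R} \right\},
\end{equation*}
and similarly for $\mathcal{R}_T$, so it suffices to show each of these sets is compact. Writing $\Phi\colon K_{n,R}\times[0,n]\to L^2(\R^d)$, $(V,t)\mapsto\psi_V(t)$, the set in question is $\Phi(K_{n,R}\times[0,n])$, a continuous image of a compact metric space, hence compact — provided $\Phi$ is continuous.

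The heart of the matter is therefore the continuity of $V\mapsto\psi_V$ from $K_{n,R}$ (with the weak-$*$ topology) to $C^0(I,L^2(\R^d))$ (with the strong topology), uniformly in $t$; joint continuity in $(V,t)$ then follows since $t\mapsto\psi_V(t)$ is continuous by the existence lemma and the norms $\|\psi_V\|_{C^0(I,L^2)}=\|\psi_0\|_2$ are uniformly bounded. To prove this I would iterate the Duhamel formula~\eqref{eq:SE mild}: writing $\psi_V = \sum_{k\geq 0}\psi_V^{(k)}$ with $\psi_V^{(0)}(t)=\ue^{\ui\Delta t}\psi_0$ and $\psi_V^{(k+1)}(t)=\int_0^t \ue^{\ui\Delta(t-s)}\psi_V^{(k)}(s)\,V(\ud s)$, the series converges in $C^0(I,L^2)$ uniformly over $K_{n,R}$ (each term is bounded, via~\eqref{eq:int def} and the Radon--Nikodym argument in~\eqref{eq:density}, by a convolution-type estimate giving a factor like $R^k/\sqrt{k!}$ thanks to Hölder in time with exponent $p_n>1$). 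Hence it is enough to show that each $V\mapsto\psi_V^{(k)}(t)$ is weak-$*$-to-strong continuous, and this is where the \emph{local smoothing effect} enters: the key building block is that the operator sending $\phi\in L^2(I,L^2(\R^d))$ to $\int_I \ue^{\ui\Delta(t-s)}\phi(s)\,\ud s$ — or more precisely the composition with multiplication and with the embedding that turns $\psi_V^{(k)}(s)V(\ud s)$ into an honest $L^2$-valued density via~\eqref{eq:density} — gains local regularity ($H^{1/2}_{\mathrm{loc}}$ in $x$, or an $L^2_{\mathrm{loc}}$-in-spacetime bound on $\langle x\rangle^{-s}\ue^{\ui\Delta t}$), so that bounded sequences of inputs are mapped to sequences that are precompact in $L^2$. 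Combining local smoothing with the free decay $\|\ue^{\ui\Delta t}\|_{L^1\to L^\infty}\lesssim |t|^{-d/2}$ to control the contribution at spatial infinity, one upgrades weak-$*$ convergence $V_j\rightharpoonup V$ to strong convergence $\psi_{V_j}^{(k)}(t)\to\psi_V^{(k)}(t)$ in $L^2(\R^d)$, uniformly in $t\in I$; summing over $k$ gives the continuity of $\Phi$.

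For $\mathcal{R}_T$ the same map is post-composed with $\psi\mapsto|\psi|^2$, which is continuous from $C^0([0,T],L^2(\R^d))$ to $C^0([0,T],L^1(\R^d))$ by Cauchy--Schwarz ($\bigl\||\psi|^2-|\tilde\psi|^2\bigr\|_1 \leq \|\psi-\tilde\psi\|_2\,(\|\psi\|_2+\|\tilde\psi\|_2)$), so compactness of the density sets follows from compactness of the wavefunction sets. I expect the main obstacle to be the rigorous implementation of the smoothing/compactness step for the iterated Duhamel terms: one must handle the measure-valued nature of $V\in M^p(I)$ (it need not be a function, so one works through the $L^2$-valued density $\phi$ of $\psi_V^{(k)}(s)V(\ud s)$ supplied by~\eqref{eq:density}), verify that weak-$*$ convergence of $V_j$ passes to weak convergence of the corresponding densities against test functions, and then use local smoothing plus a tightness argument at infinity to extract strong $L^2$ convergence — together with a careful check that all estimates are uniform in $t\in I$ and summable in $k$.
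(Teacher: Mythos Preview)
Your overall architecture matches the paper exactly: discretise $p$, $T$, $R$; use Banach--Alaoglu on the separable predual to make $K_{n,R}$ weak-$*$ compact and metrisable; prove weak-$*$-to-strong continuity of $V\mapsto\psi_V$ in $C^0(I,L^2)$; post-compose with $\psi\mapsto|\psi|^2$ for $\mathcal{R}_T$. The paper does precisely this (Lemma~7 and the union~\eqref{eq:A union}).

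The substantive divergence, and the gap, is in how you prove the weak-$*$-to-strong continuity. You expand the Dyson series $\psi_V=\sum_k\psi_V^{(k)}$ and claim uniform convergence on $K_{n,R}$ via a bound ``like $R^k/\sqrt{k!}$''. That bound is not justified and, as stated, is likely false for $V\in M^p$. The only estimate the duality~\eqref{eq:int def} and~\eqref{eq:density} give directly is $\|\psi_V^{(k+1)}\|_{L^\infty([0,T],L^2)}\le T^{1-1/p}\|V\|_{M^p}\|\psi_V^{(k)}\|_{L^\infty L^2}$, which is merely geometric; the usual simplex argument yielding $1/k!$ requires a pointwise-in-time norm $\|V(s)\|$, and elements of $M^p$ need not have one. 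So for $T^{1-1/p}R\ge1$ your series need not converge, and the reduction to term-by-term compactness breaks down.

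The paper avoids this entirely by a Grönwall argument on the full difference (Proposition~6): writing $D_n(t)=\sup_{\tau\le t}\|\psi_{V_n}(\tau)-\psi_V(\tau)\|$, one gets
\[
D_n(t)\le \sup_{\tau\le t}\sqrt{\langle V_n-V,\,L_\tau(\psi_V,\psi_V)(V_n-V)\rangle}+\int_0^t D_n(s)\|\phi_n(s)\|_2\,\ud s,
\]
so Grönwall reduces everything to the vanishing of the quadratic form $\langle V_n-V,L_t(V_n-V)\rangle$, i.e.\ to compactness of the single bilinear operator $L_I(\psi_V,\psi_V):M^p(I)\to L^{p'}(I,L^1)$ of~\eqref{eq:L_I form}. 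That compactness (Lemma~5) is obtained not via the $H^{1/2}_{\mathrm{loc}}$ Kato smoothing you invoke, but through the commutator identity $\ue^{-\ui\Delta\tau}x\,\ue^{\ui\Delta\tau}=x-2\ui\tau\nabla$, which trades spatial weights (available once one approximates $\psi_V$ by Schwartz functions) for derivatives at the cost of inverse powers of $|s-s'|$, and then Rellich--Kondrachov plus Arzel\`a--Ascoli. Your smoothing-plus-tightness sketch is morally in the same spirit, but the paper's route is both more explicit and, crucially, does not rely on any series expansion of the solution.
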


This theorem will be proved in Section~\ref{sect:proof} below.

\begin{corol}\label{cor:A}
The set $\mathcal{A}$  has empty interior in $\mathcal{S}_{\psi_0}$.
\end{corol}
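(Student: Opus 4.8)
The corollary states that $\mathcal{A}$ has empty interior in the sphere $\mathcal{S}_{\psi_0}$. The key input is Theorem \ref{thm1}: $\mathcal{A}$ is a countable union of compact subsets of $L^2(\mathbb{R}^d)$. The natural tool is Baire's category theorem applied to $\mathcal{S}_{\psi_0}$ (as a complete metric space — it's a closed subset of a Banach space). A countable union of compact sets is a countable union of closed sets, each of which is nowhere dense provided it has empty interior in $\mathcal{S}_{\psi_0}$. So the whole point reduces to showing: a compact subset $K$ of $L^2(\mathbb{R}^d)$ has empty interior relative to $\mathcal{S}_{\psi_0}$.

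Let me think about *why* that's true. $\mathcal{S}_{\psi_0}$ is a sphere in an infinite-dimensional Hilbert space. A relatively open subset of the sphere contains a "spherical cap" — intersection of the sphere with an open ball. Such a cap is not totally bounded / not compact in infinite dimensions, so it cannot be contained in the compact set $K$. More precisely: if $K \supseteq \{\psi \in \mathcal{S}_{\psi_0} : \|\psi - \psi_1\| < \varepsilon\}$ for some $\psi_1 \in \mathcal{S}_{\psi_0}$, then since $K$ is compact it's totally bounded, but one can construct an infinite sequence in this cap with pairwise distances bounded below (e.g., by rotating $\psi_1$ slightly in infinitely many orthogonal two-dimensional subspaces), contradicting total boundedness.
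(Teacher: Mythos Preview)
Your proposal is correct and follows essentially the same architecture as the paper: use Theorem~\ref{thm1} to write $\mathcal{A}$ as a countable union of compacts, then apply Baire's theorem on the complete metric space $\mathcal{S}_{\psi_0}$, reducing the problem to showing that any compact $K\subset\mathcal{S}_{\psi_0}$ has empty relative interior.

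The only difference is in this last step. The paper argues by contradiction via a cone construction: if $K$ had nonempty interior in the sphere, then $C=\bigcup_{0\le r\le 1} rK$ would be compact with nonempty interior in $L^2(\R^d)$, contradicting Riesz' theorem. You instead argue directly that a spherical cap cannot be totally bounded, by exhibiting an infinite $\delta$-separated family (rotations of $\psi_1$ towards an orthonormal sequence in $\psi_1^\perp$). Both arguments are standard and equally short; yours is slightly more self-contained, while the paper's reduces cleanly to a named theorem.
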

\begin{proof}
From Theorem~\ref{thm1} we know that $\mathcal{A}$ is contained in a countable union of compact sets. Baire's theorem states that, in a complete metric space, the countable union of closed sets with empty interior also has empty interior, c.f.~Brezis~\cite[Thm. 2.1]{Brezis}. Since $\mathcal{S}_{\psi_0}$ is complete, we thus only need to prove that any compact set
  $K\subset \mathcal{S}_{\psi_0}$ has empty interior.
 Assume to the contrary that there is such a $K$ with non-empty interior. Then consider the truncated cone
 \begin{equation}
  C= \bigcup_{0\leq r\leq 1} r K.
 \end{equation}
 One easily checks that $C\subset L^2(\R^d)$ is compact with non-empty interior, since $K$ has these properties. But this is impossible, since a compact set in $L^2(\R^d)$ cannot contain an open set, due to Riesz' theorem~\cite[Thm. 6.5]{Brezis}.
\end{proof}

\begin{corol}
  For any $T> 0$ the set $\mathcal{R}_{T}$ has empty interior in $\mathcal{D}_{\psi_0}$.
\end{corol}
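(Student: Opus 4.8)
The plan is to imitate the proof of Corollary~\ref{cor:A}. By Theorem~\ref{thm1} the set $\mathcal{R}_T$ is contained in a countable union of compact subsets of $C^0([0,T],L^1(\R^d))$. The ambient set $\mathcal{D}_{\psi_0}$ is cut out of $C^0([0,T],L^1(\R^d))$ by the constraints $\rho(0)=|\psi_0|^2$ and $\int \rho(t,x)\,\ud x = \|\psi_0\|_2^2$ for every $t\in[0,T]$; these are continuous (the first into $L^1$, the second into $\R$), so $\mathcal{D}_{\psi_0}$ is closed in $C^0([0,T],L^1(\R^d))$ and hence a complete metric space. By Baire's theorem it therefore suffices to show that every compact $K\subset \mathcal{D}_{\psi_0}$ has empty interior relative to $\mathcal{D}_{\psi_0}$.

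The homogeneous cone construction of Corollary~\ref{cor:A} does not carry over directly, because the dilation $\rho\mapsto r\rho$ destroys the pointwise constraint $\rho(0)=|\psi_0|^2$. Instead I would exploit the affine structure: write $\mathcal{D}_{\psi_0}=\rho_\star+\mathcal{D}_0$, where $\rho_\star(t):=|\psi_0|^2$ is the stationary density and
\[
 \mathcal{D}_0:=\Big\{\sigma\in C^0([0,T],L^1(\R^d))\ \Big|\ \sigma(0)=0,\ \int \sigma(t,x)\,\ud x =0\ \ \forall t\in[0,T]\Big\}
\]
is a closed linear subspace, hence itself an infinite-dimensional Banach space. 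Infinite-dimensionality is witnessed, for instance, by the linearly independent family $\sigma_k(t,x)=t\,g_k(x)$ with $g_k\in L^1(\R^d)$ linearly independent and $\int g_k\,\ud x=0$ (e.g.\ differences of disjointly supported bump functions).

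Now suppose some compact $K\subset \mathcal{D}_{\psi_0}$ had non-empty relative interior. Then $K$ would contain a closed relative ball, i.e.\ a set of the form $\rho_1+\{\sigma\in\mathcal{D}_0:\|\sigma\|_{C^0L^1}\le\delta\}$ for some $\rho_1\in\mathcal{D}_{\psi_0}$ and $\delta>0$. This set is closed in $C^0([0,T],L^1(\R^d))$, being the translate of the closed $\delta$-ball of the closed subspace $\mathcal{D}_0$, hence compact as a closed subset of the compact set $K$. Translating by $-\rho_1$ shows that the closed $\delta$-ball of $\mathcal{D}_0$ is compact, contradicting Riesz' theorem~\cite[Thm.~6.5]{Brezis} since $\mathcal{D}_0$ has infinite dimension. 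Hence no such $K$ exists. The only genuinely new point compared with Corollary~\ref{cor:A} is replacing the cone trick by this affine reduction together with the (easy) verification that $\mathcal{D}_0$ is an infinite-dimensional closed subspace; I expect no real obstacle.
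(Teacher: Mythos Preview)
Your proposal is correct and is precisely the argument the paper has in mind: its one-line proof observes that the defining conditions of $\mathcal{D}_{\psi_0}$ are linear (i.e.\ $\mathcal{D}_{\psi_0}$ is an affine subspace), whence Baire and Riesz apply directly---which is exactly your affine reduction $\mathcal{D}_{\psi_0}=\rho_\star+\mathcal{D}_0$ spelled out in detail. The cone trick from Corollary~\ref{cor:A} was only needed there because the sphere is not affine; here it is unnecessary, as you correctly note.
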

\begin{proof}
As the conditions $\rho(0)=|\psi_0|^2$ and $\int \rho(t,x) \ud x = \|\psi_0\|_2^2$ characterising $\mathcal{D}_{\psi_0}$ are linear, this follows immediately from Theorem~\ref{thm1} and the theorems of Baire and Riesz. 
\end{proof}
The same result holds if we replace $\mathcal{R}_{T}$ by the set obtained by taking a marginal of $|\psi_V |^2$, as will be clear from the proof of Theorem~\ref{thm1}.

\begin{rem*}
Let us make some remarks on possible generalisations of our results. 
\begin{enumerate}
 \item Unbounded potentials: Our conclusions also hold for potentials that are locally in 
\begin{equation}
 \left(L^{p'}\left(I,L^{q'}(\R^d,\R)\right)\right)' \cong L^{p}\left(I,L^{q}(\R^d,\R) \right) \qquad q<\infty,
\end{equation}
if $q\geq 2$, $q>d$ and $p>2q/(2q-d)$. In this case, one can circumvent the use of the spaces of measures $M^p$, but since such functions no longer take their values in the bounded operators on $L^2(\R^d)$ the proofs require some finer estimates. These can be obtained using ideas of Frank, Lewin, Lieb, Seiringer~\cite{FLLS}, in particular the generalised Kato-Seiler-Simon inequality~\cite[Lem. 1]{FLLS} (which together with the Hardy-Lttlewood-Sobolev inequality implies the analogue of Lemma~\ref{lem:L cont}, for example). The structure of the argument remains largely unchanged.

\item Sobolev spaces: Since the local smoothing property, the main ingredient of our proof, holds on the scale of Sobolev spaces,
one can consider classes of more regular potentials and initial conditions and prove that the corresponding attainable set has empty interior in $H^s(\R^d)$, $s>0$.

\item Manifolds: Our ideas should also work for the Schrödinger equation on Riemannian manifolds, provided that $\ue^{\ui t \Delta_g}$, where $\Delta_g$ is the Laplace-Beltrami operator on $(M,g)$, has an appropriate local smoothing property. This is closely related to the properties of the geodesic flow~\cite{rodnianski2015}. Roughly speaking, the geodesic flow should be  non-trapping.
Notably, this excludes compact manifolds (with boundary), where exact controllability is known to hold in some cases~\cite{beauchard2006,beauchard2010, beauchard2016}) (these results are also restricted to very regular functions, but regularity alone will not be enough to obtain controllability, by the previous remark).
\end{enumerate}
\end{rem*}

\section{Proof of Theorem~\ref{thm1}}\label{sect:proof}

To study the map $V \mapsto \psi_V$ we will first  look at its linearisation.
Let $I\subset \R$ be a compact interval and $f,g\in L^\infty(I,L^2(\R^d))$. We define a linear operator 
\begin{equation}
L_I(f,g): M^p(I) \to L^{p'}(I,L^1(\R^d))\label{eq:L_I def}
\end{equation}
by setting for any $W\in M^p(I)$
\begin{align}
  \big\langle W&, L_I(f,g) V \big\rangle_{M^p\times L^{p'}(I,L^1(\R^d))} \notag\\
  :&= \left\langle \int_I \ue^{-\ui\Delta s} g(s) W(\ud s), \int_I \ue^{-\ui\Delta s'} f(s') V(\ud s')\right\rangle_{L^2}\notag\\
  &= \left\langle W, \overline{g(s)} \int_I \ue^{\ui\Delta(s- s')} f(s') V(\ud s')\right \rangle_{M^p\times L^{p'}(I,L^1(\R^d))}.
  \label{eq:L_I form}
\end{align}

\begin{lemma}\label{lem:L cont}
 For all $I,f,g$ as above, the operator $L_I(f,g)$ is continuous. It depends continuously on $f$ and $g$.
\end{lemma}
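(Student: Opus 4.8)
The plan is to read off boundedness directly from the defining formula \eqref{eq:L_I form}, and then obtain the continuity in $f$ and $g$ by multilinearity together with the same estimate applied to differences. The key analytic input is the $L^2$-boundedness of the integral operator
\[
 \phi \longmapsto \int_I \ue^{\ui\Delta(s-s')}\phi(s')\,\ud s',
\]
viewed as a map from $L^p(I,L^2(\R^d))$ to $L^{\infty}(I,L^2(\R^d))$ (or, more economically, into $L^2(I,L^2)$), which follows from the fact that $\ue^{\ui\Delta t}$ is unitary on $L^2$ and that $I$ has finite length: indeed $\bigl\|\int_I \ue^{\ui\Delta(s-s')}\phi(s')\,\ud s'\bigr\|_{L^2}\le \int_I\|\phi(s')\|_{L^2}\,\ud s' \le |I|^{1/p'}\|\phi\|_{L^p(I,L^2)}$ by Hölder. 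Here $\phi$ is the $L^p(I,L^2)$-density of the $L^2$-valued measure $f(s')V(\ud s')$ furnished by the Radon--Nikodym property of $L^2$, as in \eqref{eq:density}, and $\|\phi\|_{L^p(I,L^2)}\le \|f\|_{L^\infty(I,L^2)}\,\|V\|_{M^p(I)}$.

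First I would make the right-hand side of \eqref{eq:L_I form} explicit: for $W\in M^p(I)$,
\[
 \bigl|\langle W, L_I(f,g)V\rangle\bigr|
 = \Bigl|\Bigl\langle \int_I \ue^{-\ui\Delta s}g(s)W(\ud s),\ \int_I \ue^{-\ui\Delta s'}f(s')V(\ud s')\Bigr\rangle_{L^2}\Bigr|
 \le \Bigl\|\int_I \ue^{-\ui\Delta s}g(s)W(\ud s)\Bigr\|_{L^2}\,\Bigl\|\int_I \ue^{-\ui\Delta s'}f(s')V(\ud s')\Bigr\|_{L^2},
\]
by Cauchy--Schwarz. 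Applying the estimate above to each factor gives
\[
 \bigl|\langle W, L_I(f,g)V\rangle\bigr|
 \le |I|^{2/p'}\,\|f\|_{L^\infty(I,L^2)}\,\|g\|_{L^\infty(I,L^2)}\,\|V\|_{M^p(I)}\,\|W\|_{M^p(I)}.
\]
Taking the supremum over $W$ in the unit ball of $M^p(I)$ identifies $\|L_I(f,g)V\|_{L^{p'}(I,L^1)}$ (this is where the assumption $M^p(I)=(L^{p'}L^1)'$ is used, so that the norm in $L^{p'}L^1$ equals the norm of the corresponding functional on $M^p(I)$), and hence
\[
 \|L_I(f,g)\|_{M^p(I)\to L^{p'}(I,L^1)}\le |I|^{2/p'}\,\|f\|_{L^\infty(I,L^2)}\,\|g\|_{L^\infty(I,L^2)},
\]
which is the asserted boundedness.

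For the continuity statement I would exploit that $(f,g)\mapsto L_I(f,g)$ is bilinear, so that
\[
 L_I(f_1,g_1)-L_I(f_2,g_2)=L_I(f_1-f_2,g_1)+L_I(f_2,g_1-g_2),
\]
and the bound just proved gives
\[
 \|L_I(f_1,g_1)-L_I(f_2,g_2)\|\le |I|^{2/p'}\bigl(\|f_1-f_2\|_{L^\infty(I,L^2)}\|g_1\|_{L^\infty(I,L^2)}+\|f_2\|_{L^\infty(I,L^2)}\|g_1-g_2\|_{L^\infty(I,L^2)}\bigr),
\]
which tends to $0$ as $(f_2,g_2)\to(f_1,g_1)$ in $L^\infty(I,L^2)\times L^\infty(I,L^2)$; this proves the claimed (joint) continuity. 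I do not expect any serious obstacle here: the only point requiring a little care is the measurability/duality bookkeeping needed to justify $\|L_I(f,g)V\|_{L^{p'}L^1}=\sup_{\|W\|\le1}|\langle W,L_I(f,g)V\rangle|$ and the existence of the density $\phi$, both of which are already prepared in the discussion around \eqref{eq:int def}--\eqref{eq:density}; and one should note that here we do \emph{not} yet need the local smoothing estimate — the plain unitarity of $\ue^{\ui\Delta t}$ on a finite interval suffices for this lemma.
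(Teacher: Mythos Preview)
Your proposal is correct and follows essentially the same approach as the paper: both derive the bound
\[
 \|L_I(f,g)V\|_{L^{p'}L^1}\le |I|^{2-2/p}\,\|f\|_{L^\infty L^2}\,\|g\|_{L^\infty L^2}\,\|V\|_{M^p},
\]
the paper by estimating the explicit formula in the second line of~\eqref{eq:L_I form} directly, you by the equivalent duality argument via the first line; and both deduce continuity in $(f,g)$ from boundedness together with multilinearity (the paper states this in one line, you spell out the bilinear splitting).
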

\begin{proof}
 Since $L_I$ is linear in all of the arguments $f,g,V$, we only need to show that it is locally bounded. From the definition of the integral~\eqref{eq:int def}, we immediately obtain that
 \begin{equation}
  \left\| \int_I \ue^{\ui\Delta(s- s')} f(s') V(\ud s')\right\|_{L^2(\R^d)} \leq \|V\|_{M^p(I)} \|f\|_{L^{p'}(I,L^2(\R^d))}.
  %\leq |I|^{1-1/p} \|V\|_{M^p(I)} \|f\|_{L^{\infty} L^2} ,
 \end{equation}
Using the Hölder inequality $\|f\|_{L^{p'}(I)}\leq |I|^{1-1/p} \|f\|_{\infty}$, this implies the bound 
\begin{equation}\label{eq:L_I bound}
 \|L_I(f,g) V\|_{L^{p'}(I,L^1(\R^d))} \leq |I|^{2-2/p}  \|g\|_{L^\infty(I, L^2(\R^d))}\|f\|_{L^\infty(I,L^2(\R^d))} \|V\|_{M^p}.
\end{equation}
\end{proof}

The main technical lemma is to show that $L_I$ is compact.

\begin{lemma}\label{lem:L comp}
 Let $1<p\leq \infty$, $I\subset \R$ compact and $f,g\in C^0(I,L^2(\R^d))$. The operator $L_I(f,g)$ given by~\eqref{eq:L_I def},~\eqref{eq:L_I form} is compact.
\end{lemma}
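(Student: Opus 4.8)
The plan is to exhibit $L_I(f,g)$ as a norm limit of finite-rank operators, using the local smoothing effect of the free Schrödinger group to gain the required compactness. The operator $L_I(f,g)$ factors through $L^2(\R^d)$: writing $\Phi_V := \int_I \ue^{\ui\Delta(\cdot - s')} f(s') V(\ud s')$, we have $L_I(f,g)V = \overline{g(s)}\,\Phi_V(s)$, so it suffices to analyse the two maps $V \mapsto (s \mapsto \ue^{-\ui\Delta s} f(s) V(\ud s)) \in L^2(\R^d)$-valued integration, followed by the ``anti-propagation and multiply by $\overline{g}$'' step. The strategy is a standard truncation in three parameters: (i) truncate the time variable using uniform continuity of $f,g$ on the compact interval $I$; (ii) truncate the spatial variable to a large ball $B_R$, controlling the tail; and (iii) truncate high frequencies of the wavefunction by inserting a smooth Fourier multiplier $\chi(-\Delta \leq \Lambda)$. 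On the frequency-and-space-localised, time-discretised piece, the resulting operator maps bounded sets of $M^p(I)$ into a bounded set of $H^1$-functions supported in $B_R$ indexed by finitely many time-slices, which is precompact in $L^{p'}(I,L^1)$ by Rellich–Kondrachov. Taking $R,\Lambda \to \infty$ and the time-mesh to zero, compactness of $L_I(f,g)$ follows if the three truncation errors go to zero uniformly on the unit ball of $M^p(I)$.

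The first error is the easiest: since $f,g \in C^0(I,L^2)$ and $I$ is compact, $f,g$ are uniformly continuous, and replacing $f(s'), g(s)$ by their values at the nearest mesh point produces an error controlled by the modulus of continuity times $\|V\|_{M^p}$, via the bound~\eqref{eq:L_I bound}. The spatial-tail error is handled by approximating $f,g$ in $L^\infty(I,L^2)$ by functions with values in $L^2(B_R)$; here Lemma~\ref{lem:L cont} (continuity of $L_I$ in $f,g$) does the bookkeeping, so one only needs that compactly supported functions are dense in $L^2(\R^d)$.

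The real work — and the step I expect to be the main obstacle — is the high-frequency truncation, because $\ue^{-\ui\Delta s}$ does not smooth in any fixed norm: the gain is only the \emph{local smoothing} (Kato) estimate, $\int_I \|\langle x\rangle^{-1}|\nabla|^{1/2} \ue^{\ui\Delta t}h\|_{L^2}^2\,\ud t \lesssim \|h\|_{L^2}^2$, together with its dual/inhomogeneous (retarded) version for the term $\int_I \ue^{\ui\Delta(s-s')} f(s') V(\ud s')$. Concretely, I would write $V(\ud s') = \phi(s')\,\ud s'$ with $\phi \in L^p(I,L^2)$ (as in~\eqref{eq:density}) and estimate the high-frequency part $(1-\chi(-\Delta\le\Lambda))\Phi_V$ in $L^2_t(I; H^{1/2}_{loc})$ by the inhomogeneous local smoothing estimate applied to $f(s')\phi(s')$, using that $f \in L^\infty(I,L^2)$ multiplies $L^p(I,L^2)$ into, say, $L^p(I,L^1)$ after a further spatial localisation — one must check the exponents close, which is exactly the point where $p>1$ (hence $p' <\infty$) is used, together with the fact that on the bounded ball $B_R$ the $H^{1/2}$ bound upgrades $L^2_t L^2_x$ control on the truncation into $L^{p'}_t L^1_x$ smallness as $\Lambda\to\infty$. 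Once these three errors are shown to vanish uniformly in $\|V\|_{M^p}\le 1$, the operator $L_I(f,g)$ is a uniform limit of compact operators, hence compact, and the continuous dependence on $f,g$ from Lemma~\ref{lem:L cont} makes the reduction to $f,g \in C^0$ legitimate.
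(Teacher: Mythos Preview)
Your strategy differs substantially from the paper's. After the common reduction to $f,g$ with Schwartz values, the paper does \emph{not} use Kato local smoothing; instead it gains regularity via the commutation $\ue^{-\ui\Delta\tau}x\,\ue^{\ui\Delta\tau}=x-2\ui\tau\nabla$, which lets one rewrite $\overline{g(s)}\,\Delta\,\ue^{\ui\Delta(s-s')}\phi(s')$ purely in terms of multiplication by $x$ and $x^2$ on $g$ and $\phi$, at the price of a factor $|s-s'|^{-2}$. Interpolating with the trivial bound gives a pointwise-in-$s$ estimate $\|(L_IV)(s)\|_{W^{m,1}}\leq C\|V\|_{M^p}$ for every $m<1-1/p$, and then the vector-valued Arzel\`a--Ascoli theorem (Rellich--Kondrachov at each $s$, plus equicontinuity in $s$ proved by the same commutation trick) yields compactness directly. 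This route trades derivatives for spatial weights and a time singularity, and so avoids the $L^2_t$-integrability constraint built into Kato smoothing.

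Your sketch, by contrast, has two genuine gaps. First, the time discretisation does not achieve what you claim: replacing $f,g$ by step functions still leaves the $s$-dependence of $(L_IV)(s)=\overline{g(s)}\,\ue^{\ui\Delta s}h_V$ through the propagator $\ue^{\ui\Delta s}$, which is \emph{not} norm-continuous, so the output is not ``indexed by finitely many time-slices'' and the truncated operator is not compact for the reason you give. (It \emph{is} compact, but one needs Arzel\`a--Ascoli together with the fact that $\chi(-\Delta\le\Lambda)(\ue^{\ui\Delta h}-1)$ has operator norm $O(|h|\Lambda)$ --- which is essentially the paper's equicontinuity step in disguise.) Second, Kato smoothing only delivers $L^2_t H^{1/2}_{\mathrm{loc}}$, so the high-frequency tail $(1-\chi)\Phi_V$ is small merely in $L^2_t$; your assertion that ``the $H^{1/2}$ bound upgrades $L^2_tL^2_x$ control into $L^{p'}_tL^1_x$ smallness'' is false as stated when $p'>2$, i.e.\ for $1<p<2$. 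The repair is to interpolate the $L^2_t$ smallness against the trivial bound $\|(1-\chi)\Phi_V\|_{L^\infty_tL^2_x}\leq C\|V\|_{M^p}$ to obtain $L^{p'}_t$ decay for every finite $p'$, but this step is missing from your outline.
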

\begin{proof}
 Since the set of compact operators is norm-closed, it is sufficient to prove the claim for a sequence of approximations of $L_I(f,g)$. By the continuous dependence of $L_I(f,g)$ on $f$ and $g$ we may thus assume that both of these functions take their values in some bounded subset of the Schwartz space $\mathscr{S}(\R^d)$.
 
 We now make this assumption of $f,g$ and suppress them in the notation for $L_I$.
 In order to prove that images of bounded subsets are pre-compact, 
 We first note that the image of $L_I$ is contained in $C^0(I,L^1(\R^d))$, since $V\in M^p$ is absolutely continuous. We will thus be able to prove compactness by applying the vector-valued Arzelà-Ascoli theorem (see Lang~\cite[Thm. III.3.1]{LangFA}), as the inclusion of $C^0$ into $L^p$ is continuous.
 
 We start by proving that the point-wise image is pre-compact, that is, for every $s\in I$ and $R>0$ the set
 \begin{equation}\label{eq:s-comp}
  \left\{ (L_I V)(s) \Big| \|V\|_{M^p}<R \right\}
 \end{equation}
is pre-compact in $L^1(\R^d)$.
From the formula~\eqref{eq:L_I form} for $L_I$ we immediately see that functions in the range of $L_I$ are rapidly decreasing since $g(s)$ is, i.e. for any $r\geq 0$
\begin{equation}
  \||x|^r (L_I V)(s,x) \|_{1} 
  %\leq \| |x|^r g(s) \|_{2} \left\|\int_0^t \ue^{\ui \Delta(s-s')} \phi(s') \ud s'\right\|_2
  \leq C_I \| |x|^r g(s) \|_{2}  \left\|f\right\|_{L^\infty(I, L^2(\R^d))} \|V\|_{M^p}.
  \label{eq:r-decay}
\end{equation}
Let $\phi\in L^p(I,L^2(\R^d))$ be the density of $f(s)V(\ud s) = \phi(s) \ud s$.
To obtain a bound on derivatives of $L_I V$, we use that for $j=1,\dots,d$ 
\begin{equation}
-2\ui \tau \partial_j \ue^{\ui \Delta \tau}= \left[x_j, \ue^{\ui \tau \Delta}\right]
% \ue^{-\ui \Delta \tau} x_j\ue^{\ui \Delta \tau}  = x_j - \ui 2 \tau \partial_j
\end{equation}
to calculate%, by writing $\Delta=-\tfrac{1}{4\tau^2}(x-2\tau \ui\nabla - x)^2$,
\begin{align}
& \overline{g(s)} \partial_j^2 \ue^{\ui \Delta\tau} \phi(s')\notag 
 = \overline{g(s)}\frac{2\ui \tau \partial_j}{4\tau^2} \left[x_j, \ue^{\ui \tau \Delta}\right]\phi(s')\notag \\
%  &=  \left(\overline{g(s)} \left(\left(\frac{x+2\tau\ui \nabla}{2\tau}\right)^2  + \frac{\ui}{2\tau} +
%  2\frac{x}{2\tau}\frac{x+2\tau\ui \nabla}{2\tau} + \frac{x^2}{4\tau^2}\right) \ue^{\ui \Delta\tau} \phi(s') \right)\notag\\
 %
 &=-\overline{g(s)}\left(\ue^{\ui \Delta\tau} \frac{x^2_j}{4\tau^2} \phi(s') +  x_j \ue^{\ui \Delta\tau} \frac{x_j}{2\tau^2} \phi(s') + \left(\frac{\ui}{2\tau}+ \frac{x^2_j}{4\tau^2} \right)\ue^{\ui \Delta\tau} \phi(s')\right).\label{eq:Lapl comm}
 \end{align}
 Using the regularity of $g$ and that $s,s'$ belong to the bounded interval $I$, we obtain from this the bound
\begin{equation}
 \left\|\overline{g(s)} \ue^{\ui \Delta(s-s')} \phi(s')\right\|_{W^{2,1}}
 \leq \frac{C \left(\|(1+x^2)g(s)\|_2 + \|\Delta g(s)\|_2 \right)}{|s-s'|^2} \|(1+x^2)\phi(s')\|_2,
\end{equation}
and, by interpolation for $0\leq m\leq 2$,
\begin{equation}
 \left\|\overline{g(s)} \ue^{\ui \Delta(s-s')} \phi(s')\right\|_{W^{m,1}}\leq \frac{C\left(\|(1+x^2)g(s)\|_2 + \|\Delta g(s)\|_2 \right)}{|s-s'|^m} \|(1+x^2)\phi(s')\|_2.
\end{equation}
Since $x^2 f(s,x)\in L^\infty(I,L^2(\R^d))$, we have for all 
$m$ such that $|s|^{-m}\in L^{p'}(I)$, i.e. 
$m < 1-1/p$,
\begin{equation}
  \|(L_I V)(s,x) \|_{W^{m,1}} \leq C(m,f,g) \|V\|_{M^p}.
  \label{eq:m-deriv}
\end{equation}
The set of functions in $L^1(\R^d)$ satisfying the bounds~\eqref{eq:r-decay} and~\eqref{eq:m-deriv} with $\|V\|_{M^p} < R$ is pre-compact as a corollary of the Rellich-Kondrachov theorem~\cite[Thm. 9.16]{Brezis} (see~\cite[Cor. 4.27]{Brezis}), which gives the pre-compactness of~\eqref{eq:s-comp}.

In order to apply the Arzelà-Ascoli theorem it remains to show that images of bounded sets are uniformly equi-continuous. Since $f,g$ are continuous functions on the compact interval $I$ they are equi-continuous, so $f(s+h)-f(s)$ and $g(s+h)-g(s)$ converge to zero in $L^\infty(I,L^2(\R^d))$ as $h\to 0$. In view of Lemma~\ref{lem:L cont} it is thus sufficient to show that for every $\eps>0$ there exists $\delta>0$ such that for all $h<\delta$
\begin{align}
 &\left\| L_I(f,g)V -\overline{g(s)} \int_I \ue^{\ui \Delta(s+h-s')} f(s') V(\ud s') \right\|_{L^{p'}(I,L^1(\R^d))}
 %\notag \\
 %
%& = \left\| \overline{g(s)} \int_I (1-\ue^{\ui \Delta h})\ue^{\ui \Delta(s-s')} \phi(s') \ud s' \right\|_{L^{p'}(I,L^1(\R^d))}
 \leq \eps \|V\|_{M^p}.\label{eq:L equi}
\end{align}
Using the formula~\eqref{eq:Lapl comm} and the fact that the commutator $[(1-\Delta)^{-1},x]$ is a bounded operator from $L^2(\R^d)$ to $H^2(\R^d)$, we obtain
\begin{align}
& \left \| \overline{g(s)} (\ue^{\ui\Delta h} -1 )  \ue^{\ui \Delta(s-s')} \phi(s')\right\|_1 \notag\\
& \leq \frac{C}{|s-s'|^2} \| \overline{g(s)} (\ue^{\ui\Delta h} -1 )(1-\Delta)^{-1}(1+x^2)\|_{L^2 \to L^1} \|(1+x^2)\phi(s')\|_2 \notag\\
&\leq \frac{h}{|s-s'|^2} C(g) \|(1+x^2)\phi(s')\|_2.
\end{align}
Interpolating between this and the trivial estimate, obtained using $\|1-\ue^{\ui\Delta h}\|\leq 2$, yields
\begin{equation}
 \left \| \overline{g(s)} (\ue^{\ui\Delta h} -1 )  \ue^{\ui \Delta(s-s')} \phi(s')\right\|_1 \leq \frac{h^{m/2}}{|s-s'|^m} C(g)\|(1+x^2)\phi(s')\|_2.
\end{equation}
Taking $m< 1-1/p$ and using Young's inequality then gives the bound
\begin{equation}
 \left\| \overline{g(s)} \int_I (1-\ue^{\ui \Delta h})\ue^{\ui \Delta(s-s')} \phi(s') \ud s' \right\|_{L^{p'}(I,L^1(\R^d))}
 \leq C(f,g)  h^{m/2} \|V\|_{M^p},
\end{equation}
which proves~\eqref{eq:L equi}.

Having established~\eqref{eq:s-comp} and~\eqref{eq:L equi}, we can now apply the Arzelà-Ascoli theorem~\cite[Thm. III.3.1]{LangFA}, and this proves the claim. 
\end{proof}

The compactness of $L_I$, the square of the linearised solution operator, can be lifted to the full solution operator $V\mapsto \psi_V$.

\begin{prop}\label{prop:unif}
 Let $p>1$, $I$ a compact interval and let $V_n$, $n\in \N$ be a bounded sequence in $M^p(I)$ that converges weakly-$*$ to $V\in M^p(I)$.
 Then the corresponding solutions $\psi_{V_n}$ converge to $\psi_V$ in the norm of $C^0(I,L^2(\R^d))$.
\end{prop}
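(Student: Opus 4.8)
The plan is to run a Banach fixed-point argument on short time intervals and to feed the weak-$*$ convergence into it through the compact operator $L_I$ of Lemma~\ref{lem:L comp}. Put $M:=\sup_n\|V_n\|_{M^p(I)}$, so that also $\|V\|_{M^p(I)}\le M$, and split $I$ into finitely many consecutive subintervals $J$ so short that $|J|^{1-1/p}M\le\tfrac12$. By the bound $\|\int_J e^{\ui\Delta(\cdot-s)}\phi(s)W(\ud s)\|_2\le\|W\|_{M^p}\|\phi\|_{L^{p'}L^2}$ from the proof of Lemma~\ref{lem:L cont} and Hölder's inequality, the Duhamel map $\psi\mapsto e^{\ui\Delta(\cdot-a)}\chi+\int_a^{\cdot}e^{\ui\Delta(\cdot-s)}\psi(s)W(\ud s)$ on such a $J$ (with $a$ its left endpoint) is a $\tfrac12$-contraction on $C^0(J,L^2)$, \emph{uniformly} for $W\in\{V_n\}_n\cup\{V\}$ and for every datum $\chi\in L^2$; its fixed points are the restrictions to $J$ of $\psi_{V_n}$ and $\psi_V$, with $\chi=\psi_{V_n}(a)$, resp.\ $\psi_V(a)$. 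Subtracting the two fixed-point identities and using the contraction estimate yields
\[
\tfrac12\,\|\psi_{V_n}-\psi_V\|_{C^0(J,L^2)}\le\|\psi_{V_n}(a)-\psi_V(a)\|_2+\sup_{t\in J}\Big\|\int_a^t e^{\ui\Delta(t-s)}\psi_V(s)\,(V_n-V)(\ud s)\Big\|_2.
\]
As $\psi_{V_n}(a)=\psi_V(a)=\psi_0$ on the first subinterval, an induction over the subintervals reduces the proposition to the following linear statement: for \emph{fixed} $f\in C^0(J,L^2)$ and $W_n$ bounded in $M^p(J)$ and converging weakly-$*$ to $0$, the functions $w_n(t):=\int_a^t e^{\ui\Delta(t-s)}f(s)W_n(\ud s)$ tend to $0$ in $C^0(J,L^2)$.

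To prove this, I would first establish convergence pointwise in $t$. Fixing $t$ and restricting everything to $[a,t]$, unitarity of $e^{\ui\Delta t}$ and the definition~\eqref{eq:L_I form} give $\|w_n(t)\|_2^2=\langle W_n,\,L_{[a,t]}(f,f)W_n\rangle_{M^p\times L^{p'}L^1}$. The operator $T:=L_{[a,t]}(f,f)$ is compact by Lemma~\ref{lem:L comp}, so $\{TW_n\}_n$ is relatively norm-compact in $L^{p'}([a,t],L^1)$; moreover $T$ is weak-$*$-to-weak continuous, because~\eqref{eq:L_I form} writes $\langle\Lambda,TW_n\rangle$ as the $L^2$-pairing of the \emph{fixed} vector $\int_a^t e^{-\ui\Delta s}f(s)\Lambda(\ud s)$ with $\int_a^t e^{-\ui\Delta s}f(s)W_n(\ud s)$, and the latter tends to $0$ weakly in $L^2$ since testing it against $h\in L^2$ produces the pairing of $W_n$ with a fixed element of $L^\infty([a,t],L^1)\subset L^{p'}([a,t],L^1)$. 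Hence $TW_n\to 0$ weakly, and with the relative norm-compactness this forces $\|TW_n\|_{L^{p'}L^1}\to 0$; therefore $\|w_n(t)\|_2^2\le\|W_n\|_{M^p}\,\|TW_n\|_{L^{p'}L^1}\to 0$ for each $t\in J$.

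To upgrade pointwise to uniform convergence I would show that $\{w_n\}_n$ is uniformly equicontinuous on $J$. For $t'<t$ in $J$ one has $w_n(t)-w_n(t')=\big(e^{\ui\Delta(t-t')}-1\big)w_n(t')+e^{\ui\Delta t}\int_{t'}^t e^{-\ui\Delta s}f(s)W_n(\ud s)$, and the last term is $O(|t-t'|^{1-1/p})$ uniformly in $n$. For the first term it suffices, exactly as in the proof of Lemma~\ref{lem:L comp} (using $\|w_n\|_{C^0L^2}\le C\|W_n\|_{M^p}\|f\|_{C^0L^2}$ to reduce to a dense set), to treat $f$ with values in a bounded subset of $\mathscr{S}(\R^d)$; then the identity $e^{-\ui\Delta s}x\,e^{\ui\Delta s}=x-2\ui s\nabla$ shows the set $\{\int_a^{t'}e^{-\ui\Delta s}f(s)W_n(\ud s)\mid n\in\N,\ t'\in J\}$ is bounded in $H^k(\R^d)$ and has uniformly bounded $|x|^k$-moments for every $k$, hence is relatively compact in $L^2(\R^d)$; by joint continuity of $(\tau,u)\mapsto e^{\ui\Delta\tau}u$ the set $\{w_n(t')\mid n\in\N,\ t'\in J\}$ then has compact closure $K$, and $\sup_{u\in K}\|(e^{\ui\Delta h}-1)u\|_2\to 0$ as $h\to 0$ supplies the required uniform modulus of continuity. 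Pointwise convergence together with uniform equicontinuity then gives $w_n\to 0$ in $C^0(J,L^2)$, which is what was needed.

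The step I expect to be the real crux is the pointwise convergence, or rather the principle underlying it: the solution map $V\mapsto\psi_V$ is only norm-to-norm continuous — useless here, since from a bounded sequence in $M^p$ one can extract nothing better than weak-$*$ convergence — so one has to convert weak-$*$ convergence into strong convergence, and this is exactly what the compactness of the \emph{quadratic} object $L_I$ delivers. That $L_I$ is compact is where the local smoothing of the free Schrödinger flow enters (already in Lemma~\ref{lem:L comp}); the accompanying weak-$*$-to-weak continuity of $L_I$, the subdivision of $I$, and the equicontinuity in $t$ are comparatively routine.
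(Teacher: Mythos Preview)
Your overall architecture is sound and morally equivalent to the paper's: where you subdivide $I$ into short pieces and use the $\tfrac12$-contraction to close, the paper uses Grönwall on the whole interval; and your pointwise step $\langle W_n,L_{[a,t]}(f,f)W_n\rangle\to 0$ via compactness plus weak-$*$-to-weak continuity of $L$ is exactly the mechanism the paper exploits. The crux you identify is also the paper's crux.

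The gap is in your upgrade from pointwise to uniform convergence. You claim that for Schwartz-valued $f$ the set $\bigl\{\int_a^{t'}\ue^{-\ui\Delta s}f(s)W_n(\ud s)\bigr\}_{n,t'}$ is bounded in $H^k(\R^d)$ for every $k$, but the commutation relation does not give this. Writing $\phi_n(s)\,\ud s=f(s)W_n(\ud s)$, the identity $\partial_j\ue^{-\ui\Delta s}=\frac{1}{-2\ui s}\bigl(\ue^{-\ui\Delta s}x_j-x_j\ue^{-\ui\Delta s}\bigr)$ produces a term $x_j\ue^{-\ui\Delta s}\phi_n(s)$ with the multiplication by $x_j$ \emph{outside} the propagator; since $\phi_n(s)$ is merely in $L^2$ (no smoothness), $\ue^{-\ui\Delta s}\phi_n(s)$ need not decay, and this term is uncontrolled. (Concretely, take $W_n(\ud s)=v(x)\,\ud s$ with $v\in L^\infty(\R)$ having a jump and $f$ independent of $s$ with $f(0)\neq 0$: then $fv\notin H^1$, and one checks in Fourier variables that $\int_a^{t'}\ue^{-\ui\Delta s}(fv)\,\ud s$ lies in $H^2$ but not in $H^3$.) In the paper's Lemma~\ref{lem:L comp} the analogous outer $x$ is absorbed by the decaying prefactor $\overline{g(s)}$, which you have deliberately dropped; without it the precompactness-in-$L^2$ argument does not go through as stated.

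The repair is painless and is what the paper does: instead of proving equicontinuity of the \emph{vector}-valued $t\mapsto w_n(t)$, prove equicontinuity of the \emph{scalar} $t\mapsto\|w_n(t)\|_2^2=\langle W_n,L_{[a,t]}(f,f)W_n\rangle$. Writing $L_t-L_{t'}=\chi_{[t',t]}L_t+\chi_{[a,t']}L_{[t',t]}$ and using the elementary bound $\|L_I(f,f)\|\le|I|^{2-2/p}\|f\|_{L^\infty L^2}^2$ of Lemma~\ref{lem:L cont} gives $\bigl|\,\|w_n(t)\|_2^2-\|w_n(t')\|_2^2\,\bigr|\le C\,\|W_n\|_{M^p}^2\,|t-t'|^{1-1/p}$ uniformly in $n$. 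Combined with your pointwise convergence this yields $\sup_t\|w_n(t)\|_2\to 0$, which is all your induction needs.
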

\begin{proof}
 Let $D_n(t):=\sup_{0\leq \tau \leq t} \|\psi_{V_n}(\tau) - \psi_{V}(\tau) \|_2$. Then, writing simply $L_t$ for the operator $L_{[0,t]}(\psi_V,\psi_V)$, we have
 \begin{equation}
 D_n(t) \leq \sup_{0\leq \tau \leq t}
  \left( \sqrt{\langle V_n-V, L_{\tau}(V_n-V) \rangle} + \left\|\int_0^\tau\ue^{-\ui\Delta s}  (\psi_{V}(s)-\psi_{V_n}(s)) V_n(\ud s) \right\|_2\right).
 \end{equation}
By the Radon-Nikodym property of $L^2(\R^d)$ there exists $\phi_n \in L^p([0,T],L^2(\R^d))$ such that 
\begin{equation}
 \frac{\psi_{V}(s)-\psi_{V_n}(s)}{\|\psi_{V}(s)-\psi_{V_n}(s)\|_2 } V_n( \ud s) = \phi_n(s) \ud s,
\end{equation}
and $\int \|\phi_n \|_2^p \leq \|V_n\|^p_{M^p}$.
Inserting this into the previous inequality, we obtain
\begin{equation}
 D_n(t) \leq \sup_{0\leq \tau \leq t} \sqrt{\langle V_n-V, L_{\tau}(V_n-V) \rangle} +  \int_0^t D_n(s) \|\phi_n(s) \|_2 \ud s.
\end{equation}
Grönwall's inequality then gives
\begin{equation}
 D_n(T) \leq \left(\sup_{0\leq t \leq T} \sqrt{\langle V_n-V, L_{t}(V_n-V) \rangle}\right)
 \exp \left( \int_0^T \|\phi_n(s)\|_2 \ud s \right).
\end{equation}
The proof will thus be complete if we can show that the first factor tends to zero.
We already know, from Lemma~\ref{lem:L comp}, that $\langle V_n-V, L_{t}(V_n-V)\rangle$ converges to zero point-wise. To show that this convergence is uniform, we will prove that the sequence is uniformly equi-continuous on $[0,T]$. To do this, let $t'<t$ and extend functions in $L^{p'}([0,t'])$ by zero to obtain functions in $L^{p'}([0,t])$.
With this convention, we  have
\begin{equation}
 L_{t}W - L_{t'}W = \chi_{[t',t]} L_t W +  \chi_{[0,t']} L_{[t',t]} W,
\end{equation}
where $\chi_I$ denotes the characteristic function of $I$.
The bound~\eqref{eq:L_I bound} then implies
\begin{equation}
 |\left\langle V_n-V, L_{t}V_n-V \right\rangle - \left\langle V_n-V, L_{t'}V_n-V \right\rangle| \leq C \|V_n-V\|^2_{M^p} |t-t'|^{1-1/p}.
\end{equation}
Since the sequence $V_n$ was assumed to be bounded, this gives the required uniform continuity, and thus proves our claim.
\end{proof}

It is now rather straightforward to prove Theorem~\ref{thm1}.

% why can we sue sequences ?
\begin{lemma}\label{lem:A T,p,R}
Let $T\geq 0$, $R\geq 0$ and $1<p\leq \infty$. The set 
\begin{equation}
\mathcal{A}_{T,p,R} = \left\{ \psi_V(t) \Big| 0\leq t\leq T,\, \|V\|_{M^p([0,T])} \leq R \right\} 
\end{equation}
is compact in $L^2(\R^d)$.
\end{lemma}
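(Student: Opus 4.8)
The plan is to realise $\mathcal{A}_{T,p,R}$ as the image of a compact set under a sequentially continuous map into a metric space, which makes the image compact. First I would fix the parametrising set: let
\[
 \mathcal{B}_{T,R}:=\left\{ V\in M^p([0,T]) \,\Big|\, \|V\|_{M^p([0,T])}\leq R \right\}.
\]
Since $M^p([0,T])=\left(L^{p'}([0,T],L^1(\R^d))\right)'$ is the dual of a separable Banach space (as $p>1$, so $p'<\infty$, and $L^1(\R^d)$ is separable), the Banach--Alaoglu theorem gives that $\mathcal{B}_{T,R}$ is weak-$*$ compact, and moreover metrisable in the weak-$*$ topology. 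Thus $\mathcal{B}_{T,R}$ is a compact metric space.

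Next I would consider the evaluation map. For fixed $t\in[0,T]$, Proposition~\ref{prop:unif} says precisely that $V\mapsto \psi_V(t)$ is sequentially continuous from $\mathcal{B}_{T,R}$ with the weak-$*$ topology to $L^2(\R^d)$ with the norm topology; in fact it gives the stronger statement that $V\mapsto \psi_V$ is sequentially continuous into $C^0([0,T],L^2(\R^d))$. Since $\mathcal{B}_{T,R}$ is metrisable, sequential continuity is continuity, so the map $\Phi:\mathcal{B}_{T,R}\to C^0([0,T],L^2(\R^d))$, $\Phi(V)=\psi_V$, is continuous on a compact space, hence $\Phi(\mathcal{B}_{T,R})$ is compact in $C^0([0,T],L^2(\R^d))$. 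Now $\mathcal{A}_{T,p,R}$ is the image of the compact set $\Phi(\mathcal{B}_{T,R})\times[0,T]$ under the (jointly continuous) evaluation map $(\psi,t)\mapsto \psi(t)$ from $C^0([0,T],L^2)\times[0,T]$ to $L^2(\R^d)$, and is therefore compact.

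The step I expect to need the most care is confirming that $\mathcal{B}_{T,R}$ is weak-$*$ \emph{metrisable}, so that Proposition~\ref{prop:unif}, which is phrased in terms of sequences, actually yields topological continuity; this is where separability of $L^{p'}([0,T],L^1(\R^d))$ for $1<p\leq\infty$ (equivalently $1\leq p'<\infty$) is used, and it is worth noting that the borderline $p=\infty$ is included precisely because then $p'=1$ and $L^1([0,T],L^1(\R^d))$ is still separable. Everything else is a routine assembly of Banach--Alaoglu, Proposition~\ref{prop:unif}, and the continuity of evaluation, so no further computation is required.
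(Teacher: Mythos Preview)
Your proposal is correct and follows essentially the same approach as the paper: Banach--Alaoglu plus separability of $L^{p'}([0,T],L^1(\R^d))$ to get a weak-$*$ compact metrisable ball, Proposition~\ref{prop:unif} to pass from $V$ to $\psi_V$, and compactness of $[0,T]$ to handle the time variable. The only difference is packaging: the paper argues directly by extracting a convergent subsequence from $(V_n,t_n)$ and invoking Proposition~\ref{prop:unif}, whereas you phrase the same ingredients as ``continuous image of a compact set is compact'' via the factorisation through $C^0([0,T],L^2)$ and the evaluation map.
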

\begin{proof}
 Let $\psi_{V_n}(t_n)$, $n\in \N$ be a sequence in $\mathcal{A}_{T,p,R}$. 
 The space $M^p([0,T])$ is locally weak-$*$ compact by the Banach Alaoglu theorem. Since $L^{p'}([0,T],L^1(\R^d))$, $p'=(1-1/p)^{-1}<\infty$, is separable, the weak-$*$ topology is locally metrisable.
 We can thus extract a subsequence, denoted by the same symbols, such that $t_{n}\to t\in [0,T]$ and $V_{n}\rightharpoonup^* V\in M^p([0,T])$ as $n\to \infty$. Then $\psi_{V_{n}} \to \psi_{V}$ uniformly by Proposition~\ref{prop:unif}, which implies convergence of $\psi_{V_{n}}(t_n)$ to $\psi_V(t)$. 
 This shows that $\mathcal{A}_{T,p,R}$ is compact.
\end{proof}

Using Lemma~\ref{lem:A T,p,R} we can write $\mathcal{A}$ as a countable union of compact sets:
The inclusion $M^p([0,T]) \subset M^q([0,T])$, $p\geq q$ gives $\mathcal{A}_{T,p,R} \subset \mathcal{A}_{T,q,R}$, and obviously $\mathcal{A}_{T,p,R} \subset \mathcal{A}_{T',p,R'}$ for $T\leq T'$, $R\leq R'$. Hence for a decreasing sequence $p_n>1$, $n\in \N$, converging to one, the sets $\mathcal{A}_{n,p_n,n}$ form an increasing sequence of compact sets and
\begin{equation}\label{eq:A union}
 \mathcal{A}= \bigcup_{n\in \N} \mathcal{A}_{n,p_n,n}.
\end{equation}

Essentially the same argument holds for $\mathcal{R}_T$ in view of the following lemma.

\begin{lemma}
 Let $T\geq 0$, $R\geq 0$ and $1<p\leq \infty$. The set 
\begin{equation}
  \mathcal{R}_{T,p,R}:=\left \{ t\mapsto |\psi_V(t)|^2 \Big| \|V\|_{M^p([0,T])} \leq R \right\}
\end{equation}
is compact in $C^0([0,T],L^1(\R^d))$.
\end{lemma}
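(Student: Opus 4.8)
The plan is to repeat the proof of Lemma~\ref{lem:A T,p,R} almost verbatim, with the evaluation map $\psi\mapsto\psi(t)$ replaced by the nonlinear map $\psi\mapsto|\psi|^2$. Let $|\psi_{V_n}|^2$, $n\in\N$, be a sequence in $\mathcal{R}_{T,p,R}$, so that $\|V_n\|_{M^p([0,T])}\leq R$. First I would extract a subsequence along which the potentials converge: the closed ball of radius $R$ in $M^p([0,T])=\bigl(L^{p'}([0,T],L^1(\R^d))\bigr)'$ is weak-$*$ compact by the Banach--Alaoglu theorem, and, since $L^{p'}([0,T],L^1(\R^d))$ is separable (as $p'<\infty$), it is metrisable for the weak-$*$ topology. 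Hence one obtains a subsequence, still denoted $V_n$, with $V_n\rightharpoonup^*V$ for some $V\in M^p([0,T])$, $\|V\|_{M^p([0,T])}\leq R$, and Proposition~\ref{prop:unif} then gives $\psi_{V_n}\to\psi_V$ in the norm of $C^0([0,T],L^2(\R^d))$.

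The second step is to transport this convergence through the squaring map. The key point is that $\psi\mapsto|\psi|^2$ is Lipschitz on bounded subsets of $L^2(\R^d)$ as a map into $L^1(\R^d)$: from the pointwise bound $\bigl|\,|\psi|^2-|\varphi|^2\,\bigr| = \bigl(|\psi|+|\varphi|\bigr)\,\bigl|\,|\psi|-|\varphi|\,\bigr| \leq \bigl(|\psi|+|\varphi|\bigr)\,|\psi-\varphi|$ and the Cauchy--Schwarz inequality one gets
\begin{equation*}
 \bigl\|\,|\psi|^2-|\varphi|^2\,\bigr\|_1 \leq \|\psi-\varphi\|_2\bigl(\|\psi\|_2+\|\varphi\|_2\bigr).
\end{equation*}
Applying this at each $t\in[0,T]$ with $\psi=\psi_{V_n}(t)$ and $\varphi=\psi_V(t)$, and using conservation of the $L^2$-norm so that $\|\psi_{V_n}(t)\|_2=\|\psi_V(t)\|_2=\|\psi_0\|_2$, the right-hand side is at most $2\|\psi_0\|_2\,\|\psi_{V_n}(t)-\psi_V(t)\|_2$. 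Taking the supremum over $t$ and invoking the $C^0([0,T],L^2)$-convergence from the first step yields $|\psi_{V_n}|^2\to|\psi_V|^2$ in $C^0([0,T],L^1(\R^d))$. Since $|\psi_V|^2\in\mathcal{R}_{T,p,R}$, this proves that $\mathcal{R}_{T,p,R}$ is sequentially compact, hence compact.

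I do not expect a genuine obstacle here, as all the analytic difficulty is already contained in Lemma~\ref{lem:L comp} and Proposition~\ref{prop:unif}; the two extra ingredients — the weak-$*$ metrisability of bounded sets of $M^p([0,T])$ and the local Lipschitz continuity of the squaring map — are elementary. The only point that requires a little care is that the continuity estimate for $\psi\mapsto|\psi|^2$ must hold uniformly in $t$, which is precisely why conservation of the $L^2$-norm is used. Finally, exactly the same argument works with $|\psi_V|^2$ replaced by any of its marginals, since taking a marginal is a contraction $L^1(\R^d)\to L^1(\R^k)$; this is the remark made after the corollaries.
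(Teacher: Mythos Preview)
Your proof is correct and follows essentially the same route as the paper: weak-$*$ compactness of the ball in $M^p([0,T])$ together with Proposition~\ref{prop:unif} gives compactness of the trajectory set in $C^0([0,T],L^2(\R^d))$, and then one pushes through the continuous map $\psi\mapsto|\psi|^2$. The paper states this in two lines, merely invoking continuity of the squaring map, whereas you spell out the explicit Lipschitz estimate and the use of norm conservation; both arguments are the same in substance.
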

\begin{proof}
 By Proposition~\ref{prop:unif} and the local weak-$*$ compactness of $M^p([0,T])$, the set of trajectories $t\mapsto \psi_V(t)$ is compact in $C^0([0,T],L^2(\R^d))$. This implies the claim, since the map $\psi\mapsto |\psi|^2$ is continuous. 
\end{proof}
Note that this statement stays exactly the same if we take additionally a marginal of $|\psi_V(t)|^2$.
We then have
\begin{equation}
\mathcal{R}_T = \bigcup_{n\in \N} \mathcal{R}_{T,p_n,n},
\end{equation}
with a sequence $p_n\to 1$ as in~\eqref{eq:A union}. This completes the proof of Theorem~\ref{thm1}.

\section*{Acknowledgements}

The author wishes to thank Nabile Boussa\"{\i}d for explaining to him the results of~\cite{ball1982, boussaid2014, boussaid2017}. He also thanks Mathieu Lewin for many discussions on related topics. Financial support from the ANR QUACO (PRC ANR-17-CE40-0007-01) and the 
 European Research Council (ERC) under the European Union’s Horizon 2020 research and innovation programme
 (grant no.725528 MDFT, principal investigator M. Lewin) is gratefully acknowledged.

% 
% \bibliographystyle{plain}
% \bibliography{biblio}

\end{document}